\newtheorem{theorem}{Theorem}
\newtheorem{definition}{Definition}
\newtheorem{lemma}{Lemma}
\newcommand{\RT}{\ensuremath{\mathfrak{T}}\xspace}
\newcommand{\Rtens}[2]{\ensuremath{\RT({#1};{#2})}\xspace}
\newcommand{\ter}{\ensuremath{{\sf final}}}
\begin{document}

\sloppy

\title{A New Upperbound for the Oblivious Transfer Capacity of Discrete Memoryless Channels} 
  \author{
  \IEEEauthorblockN{K. Sankeerth Rao }
  \IEEEauthorblockA{Department of Electrical Engineering\\
                    Indian Institute of Technology, Bombay\\
                    Mumbai, India\\
                    Email: sankeerth1729@gmail.com} 
  \and
  \IEEEauthorblockN{Vinod M. Prabhakaran}
  \IEEEauthorblockA{School of Technology and Computer Science\\
                    Tata Institute of Fundamental Research\\
                    Mumbai, India\\
                    Email: vinodmp@tifr.res.in} 
  
   }

\maketitle

\begin{abstract}
We derive a new upper bound on the string oblivious transfer capacity of discrete memoryless channels (DMC). The main tool we use is the tension region of a pair of random variables introduced in Prabhakaran and Prabhakaran (2014) where it was used to derive upper bounds on rates of secure sampling in the source model. In this paper, we consider secure computation of string oblivious transfer in the channel model. Our bound is based on a monotonicity property of the tension region in the channel model. We show that our bound strictly improves upon the upper bound of Ahlswede and Csisz\'ar (2013).
\end{abstract}

\section{Introduction}

The goal of secure function computation is for users in a network to compute
functions of their collective data in such a way that users do not learn any
additional information about the data than the output of the functions they are
computing. This forms a central theme of modern cryptography under the rubric
of Secure Multiparty Computation.

In general, information theoretically secure function computation between
two users, who are equipped only with private/common randomness and
noiseless communication channels between them, is infeasible except for a
class of essentially trivial functions~\cite{Kushilevitz92}. However,
Cr\'epeau and Kilian showed that any function may be computed information
theoretically securely if a (non-trivial) noisy channel is available from
one of the users to the other~\cite{CrepeauKi88}. The approach was to show
that a certain primitive secure computation called {\em oblivious transfer}
(OT)~\cite{Wiesner88} is feasible given such a noisy channel resource, and
then rely on a reduction of of two-party computation to OT by
Kilian~\cite{Kilian88}.

OT (more specifically, 1-out-of-2 $m$-string OT) is the following secure
function computation between two users, say, Alice and Bob: Alice is given
2 strings $S_0, S_1$ picked independently and identically uniformly
distributed from $\{0,1\}^m$, Bob is given a uniform binary bit $K$,
independent of $S_0, S_1$. Alice is required to produce no output and Bob
should output $S_K$. Furthermore, Alice should not learn any information
about $K$ and Bob should not learn anything about the string $S_{\bar{K}}$,
where $\bar{K}=K + 1 \text{ mod } 2$. As must be clear from the discussion
above, OT cannot be securely computed when Alice and Bob only have access
to noise-free communication channels and private/common randomness.

Motivated by its role in secure computation, several works have addressed
the rate at which OT can be obtained from a discrete memoryless channel
(DMC). In~\cite{NascimentoW08}, OT capacity of a DMC was defined as the
largest rate of $m$-over-$n$, where $n$ is the number of channel uses,
achievable when Bob recovers $S_K$ with vanishing probability of error and
under vanishing information leakage measured via conditional mutual
informations. The paper also characterized noisy resources which provide a
strictly positive OT capacity. The OT capacity of erasure channels was
obtained in~\cite{ImaiMN06} for the honest-but-curious setting, where the
users follow the protocol faithfully, but attempt to derive information
they are not allowed to know from everything they have access to at the end
of the protocol. Ahlswede and Csisz\'ar~\cite{AhlswedeC13} characterized
the OT capacity for a more general class of channels called the {
generalized erasure channels}. In~\cite{PintoDMN11}, it was shown that the
OT capacity of generalized erasure channels remain the same even when the
users are allowed to be malicious.  The best known upper bounds on the OT
capacity of DMCs are due to Ahlswede and
Csisz\'ar~\cite{AhlswedeC13}\footnote{The same upper bounds can be inferred
from an earlier work by Wolf and Wullschleger~\cite{WolfW08} for the case
of zero-error and perfect privacy.}. These bounds, which apply for the case
of honest-but-curious users (and therefore, also for malicious users), were
obtained by weakening the problem of obtaining OT from a DMC to a secret
key agreement problem. In this paper we strictly improve upon these bounds. 

The main tool we use is the {\em tension region} $\Rtens{U}{V}$ of a pair
of random variables $U,V$ introduced in~\cite{PrabhakaranPr14}. Defined as
the increasing hull of the set of all $(I(V;Q|U),I(U;Q|V),I(U;V|Q)$, where
$Q$ is some random variables jointly distributed with $U,V$, it has the
interpretation as a rate-information tradeoff region for a distributed
common randomness generation problem which generalizes the setting of
G\'acs and K\"orner~\cite{GacsK73}. Specifically, consider a genie who has
access to $U^n,V^n\sim p(u,v)$ i.i.d., who needs to communicate to a user
with only $U^n$ and separately to a user with only $V^n$ such that two
users may agree (with vanishing probability of error) on a common random
variable $W$. The ``quality'' is measured by how small the average
``residual information'' $I(U^n;V^n|W)/n$ is. It was shown
in~\cite{PrabhakaranPr14} that the trade-off between the two rates of
communication from the genie to the users and the quality of the common
random variable agreed by the users is given by the tension region.

In~\cite{PrabhakaranPr14}, properties of tension region were used to derive
upper bounds on the rate of a form of secure computation with no inputs,
but randomized outputs, called {\em secure sampling} for the {\em source
model}, i.e., the ``noisy'' resource available to the two users are
observations from a distributed source (rather than a noisy channel as
here), and the goal of the secure computation is to produce samples of
another distributed source in such a way that neither user can infer any
more information about each other's output than can be inferred from their
own outputs\footnote{In fact, it is easy to show that secure computation of
OT is equivalent to secure sampling of the distribution: $A=(W_0,W_1)$ by
Alice and $B=(J,W_J)$ by Bob, where $W_0,W_1\in\{0,1\}^m$ and $J\in\{0,1\}$
indepedent and uniform. Hence, the results in~\cite{PrabhakaranPr14} can be
used to derive bounds on OT capacity of discrete memoryless sources. Using
a lemma of this paper, we give explicit bounds in
Section~\ref{sec:discussion}.}. The upper bound technique was a monotonicity
result for secure sampling protocols which implies that the tension region
of the outputs must contain the tension region of the distributed source
samples.

In contrast, this paper deals with the {\em channel model}. The main
technical contributions include a version of the monotonicity result for
the channel model. It turns out that, unlike in the source model, the whole
tension region does not satisfy a useful (i.e., single-letterizable)
monontonicty property, but its restriction to the $I(V;Q|U)=0$ plane does.
Specifically, we show that the restricted tension region of the
inputs-and-outputs of the function being securely computed must contain the
(Minkowski) sum of the restricted tension regions of the input and output
of the DMC at each channel use. We turn this into an upper bound on the OT
capacity by characterizing the restricted tension region of the
inputs-and-outputs of the OT function. In the interest of space, we only
present the argument required to obtain our upper bound on OT capacity in
this paper. The more general monotonicity result is deferred to a
full-length version.

\section{Problem Statement}
\usetikzlibrary{decorations.markings}
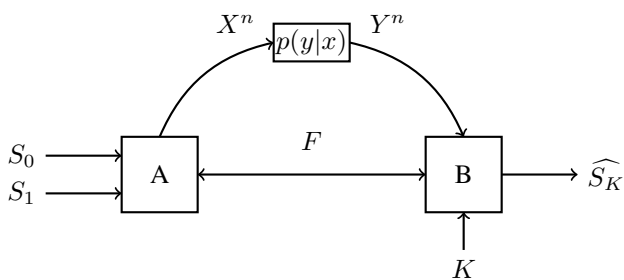
\begin{figure}[htb]
\setlength{\unitlength}{1cm}
\centering
\begin{tikzpicture}[scale=1, thick]
\draw (1,0) rectangle (2,1);
\draw (5,0) rectangle (6,1);
\draw (3,2) rectangle (4,2.5);

\draw [->] (0,0.25) -- (1,0.25);
\draw [->] (0,0.75) -- (1,0.75);
\draw [->] (5.5,-0.5) -- (5.5,0);
\draw [->] (6,0.5) -- (7,0.5);

\draw[->] (1.5,1) to [bend left=27] (3,2.25);
\draw[->] (4,2.25) to [bend left=27] (5.5,1);

\draw [<->] (2,0.5) -- (5,0.5);
\node at (3.5,2.25) {$p(y|x)$};
\node at (1.5,0.5) {A};
\node at (5.5,0.5) {B};

\node [below] at (3.5,1.2) {$F$};
\node [left] at (0,0.25) {$S_1$};
\node [left] at (0,0.75) {$S_0$};
\node [below] at (5.5,-0.5) {$K$};
\node[right] at (7,0.5) {$\widehat{S_K}$};
\node[above] at (2.5,2.25) {$X^n$};
\node[above] at (4.5,2.25) {$Y^n$};

\end{tikzpicture}
\caption{String Oblivious Transfer}
\label{otp}
\label{fig:fig1}
\end{figure}

Consider the setup in Fig.~\ref{fig:fig1}.  Alice's data are two strings
$S_0,S_1$ chosen independently and uniformly from $\{0,1\}^m$. Bob's data is a
uniform bit $K\in\{0,1\}$ independent of $S_0,S_1$. The goal is for Bob to
learn $S_K$. We require that neither user learn any (significant) amount of
additional information about the other's data apart from Bob learning $S_K$.
They have access to unlimited amounts of private randomness (i.e., randomness
independent of each other and of the data) and a noiseless discussion channel.
There is also a DMC from Alice to Bob: $p(y|x)$ where $ x \in\mathcal{X}$, the
input alphabet, and $y \in\mathcal{Y}$, the output alphabet. Before each
instance of using the DMC and after the last use of the DMC, Alice and Bob may
exchange messages with each other over the noiseless discussion channel,
potentially over multiple rounds. There are no constraints on the number of
rounds of message exchange they may engage in over the discussion channel
except that it be finite with probability 1.  We assume that the users are
honest-but-curious.

\begin{definition} Alice and Bob are said to have followed an {\em
$(n,m,\epsilon)$~secure protocol} if, the strings $S_0$ and $S_1$ input to
Alice have length $m$ each (as above), the protocol makes $n$ uses of the
DMC, and at the end of the protocol, Bob can output $\widehat{S_K}$ which
agrees with $S_K$ with probability at least $1-\epsilon$, and if the
transcript $F$ of the messages exchanged on the discussion channel and the
inputs $X^n$ and outputs $Y^n$ of the DMC satisfy the following privacy
constraints\footnote{Notice that we do not need to explicitly bring in the
private random variables in defining the privacy conditions since,
conditioned $F,X^n,S_0,S_1$, Alice's private randomness is independent of
$K$ and, similarly, conditioned on $F,Y^n,K$, Bob's private randomness is
independent of $S_0,S_1$.}: 
\begin{align}
I(F,X^n;K|S_0,S_1) &\leq \epsilon, \label{eq:security-against-A}\\
I(F,Y^n;S_{\bar{K}}|K) &\leq n\epsilon. \label{eq:security-against-B}
\end{align} 
\end{definition} 
Notice that \eqref{eq:security-against-A} guarantees Bob's privacy against
Alice, and \eqref{eq:security-against-B} guarantees privacy for Alice against
Bob.

\begin{definition}
A rate $R$ is said to be {\em achievable} if there is a sequence of
$(n,nR,\epsilon_n)$~secure protocols such that $\epsilon_n\to 0$ as $n \to
\infty$. The supremum of all achievable rates is the {OT~capacity},
$C$, of the DMC.
\end{definition} 

Our main result is the following upper bound on OT capacity. 
\begin{theorem}\label{bound}\label{thm:main}
\begin{align}
C \leq \max\limits_{p(x)} \min\limits_{Q-X-Y} I(X;Q|Y) + I(X;Y|Q),
\label{eq:main}
\end{align}
where the the minimization is over random variables $Q$ jointly distributed
with $X,Y$ satisfying the Markov chain constraint $Q-X-Y$ and the
cardinality bound $|\mathcal{Q}| \leq |\mathcal{X}||\mathcal{Y}| + 2$.
\end{theorem}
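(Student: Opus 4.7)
My plan is to combine a channel-model monotonicity property of the restricted tension region, a tight lower bound on the sum of the two coordinates of the restricted tension region of the OT distribution, and a standard single-letterization.

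Let $\mathfrak{T}_0(U;V)$ denote the upward closure in $\mathbb{R}^2_{\geq 0}$ of $\{(I(U;Q|V), I(U;V|Q)) : Q \text{ with } Q-U-V\}$, i.e., the slice of $\Rtens{U}{V}$ on the $I(V;Q|U)=0$ plane. As sketched in the introduction, I would first establish the channel-model monotonicity: for any $(n,m,\epsilon)$-secure protocol with channel inputs/outputs $(X_i,Y_i)$ and $A=(S_0,S_1), B=(K,S_K)$,
$$
\mathfrak{T}_0(A;B) + n\delta(\epsilon)\cdot(1,1) \;\supseteq\; \bigoplus_{i=1}^{n} \mathfrak{T}_0(X_i;Y_i),
$$
where $\bigoplus$ is Minkowski sum and $\delta(\epsilon)\to 0$. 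The restriction to the $I(V;Q|U)=0$ plane is essential so that the Markov chain $Q-X_i-Y_i$, which is automatic at every channel use because $Y_i$ is generated from $X_i$ alone by the DMC, is preserved.

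Next, I would show that for the OT distribution every $(\alpha,\beta)\in\mathfrak{T}_0(A;B)$ satisfies $\alpha+\beta\geq m$. Using $K\perp(S_0,S_1,Q)$ and $S_0\perp S_1$, a direct calculation for any $Q-A-B$ gives
$$
I(A;Q|B) = \tfrac{1}{2}\bigl[I(S_0;Q)+I(S_1;Q)\bigr] + I(S_0;S_1|Q), \quad I(A;B|Q) = m - \tfrac{1}{2}\bigl[I(S_0;Q)+I(S_1;Q)\bigr],
$$
so $I(A;Q|B)+I(A;B|Q) = m + I(S_0;S_1|Q) \geq m$. For an approximate-OT protocol the same bound holds up to an $o_\epsilon(1)\cdot n$ slack via Fano's inequality applied to \eqref{eq:security-against-A} and \eqref{eq:security-against-B}.

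Combining the two, the minimum of $\alpha+\beta$ over the Minkowski sum equals $\sum_{i=1}^n \min_{Q_i-X_i-Y_i}\bigl[I(X_i;Q_i|Y_i)+I(X_i;Y_i|Q_i)\bigr]$, and must therefore be at least $m-n\delta(\epsilon)$. Each summand depends only on the marginal $p(x_i)$ (the channel being fixed), and is therefore at most $R^*:=\max_{p(x)}\min_{Q-X-Y}\bigl[I(X;Q|Y)+I(X;Y|Q)\bigr]$, giving $m\leq nR^*+n\delta(\epsilon)$ and hence $R\leq R^*$ as $\epsilon\to 0$. The cardinality bound $|\mathcal{Q}|\leq|\mathcal{X}||\mathcal{Y}|+2$ is a Fenchel--Eggleston--Carath\'eodory argument preserving the $|\mathcal{X}||\mathcal{Y}|-1$ parameters of $p(x,y)$ together with the two objective values $I(X;Q|Y)$ and $I(X;Y|Q)$. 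The main obstacle is the monotonicity step: unlike the source-model analogue of~\cite{PrabhakaranPr14}, the full tension region does not single-letterize cleanly under channel use, and one must restrict to the $I(V;Q|U)=0$ slice so that each channel use contributes exactly one Minkowski summand. The proof would proceed by induction on the rounds, using the chain rule for the tension coordinates at discussion-channel messages and the memorylessness of the DMC at channel uses, with continuity arguments to absorb correctness and privacy errors into $\delta(\epsilon)$.
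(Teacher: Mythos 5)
Your proposal is correct and follows essentially the same route as the paper: the paper's introduction describes exactly your restricted-tension-region/Minkowski-sum monotonicity, and its Section~3 implements it through the scalar $\alpha(U;V)=\min_{Q-U-V}[I(U;Q|V)+I(U;V|Q)]$, with its Lemma~\ref{lem:propa} being your per-channel-use step, its Lemma~\ref{lem:propd} being your $\alpha+\beta\geq m$ bound (your exact identity $I(A;Q|B)+I(A;B|Q)=m+I(S_0;S_1|Q)$ checks out), and its Lemmas~\ref{lem:continuity}--\ref{lem:propcadditional} supplying the $\alpha_\epsilon$/right-continuity argument you invoke as ``continuity arguments.'' Only two small points: the slack in your stated monotonicity, $\mathfrak{T}_0(A;B)+n\delta(\epsilon)(1,1)\supseteq\bigoplus_i\mathfrak{T}_0(X_i;Y_i)$, sits on the wrong side (since the region is upward closed this would strengthen rather than weaken the exact containment; the provable form puts the $n\delta(\epsilon)$ shift on the Minkowski sum, which is what your downstream inequality $m\leq nR^*+n\delta(\epsilon)$ actually uses), and the heaviest ingredient---the per-channel-use superadditivity and the careful absorption of the privacy conditions \eqref{eq:security-against-A}--\eqref{eq:security-against-B}---is asserted rather than proved, which you do acknowledge as the main obstacle.
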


The currently best known upper bound is due to Ahlswede and
Csisz\'ar~\cite{AhlswedeC13}: 
\begin{align}
C \leq \max\limits_{p(x)} \; \min(I(X;Y), H(X|Y)).\label{eq:AC13}
\end{align}
It is easy to see that Theorem~\ref{bound} subsumes this. For a fixed
$p(x)$ in \eqref{eq:main}, notice that choosing $Q = \emptyset$ gives the
bound $I(X;Y)$, and choosing $Q=X$ gives the bound $H(X|Y)$. We shall show
in Section~\ref{ex} that our bound is a strict improvement on
\eqref{eq:AC13}.

\section{Proof of Theorem 1}{\label{prf}}

Conisder an $(n,nR,\epsilon)$ secure protocol, let all the random variables
that Alice has access to {\em after} the $i$-th usage of the DMC be called
the {\em view} of Alice at the $i$-th stage and be represented by $U_i$,
$i=0,1,2,\ldots,n$, where $U_{0}$ denotes Alice's view at the beginning of
the protocol, i.e., $U_0$ is made up of $S_0,S_1,$ and the private
randomness of Alice.  Similarly, we define the view of Bob till the $i$-th
stage and denote it by $V_i$. Let $U_\ter$ and $V_\ter$ be the views of
Alice and Bob at the termination of the protocol (after Bob outputs).
$U_\ter$ consists of $S_0,S_1$, the transcript $F$ of the discussion over
the noisefree channel, the inputs $X^n$ to the DMC and Alice's private
randomness. Similarly, $V_\ter$ comprises $K,F,Y^n,\widehat{S_K}$, and
Bob's private randomness.   

For a pair of jointly distributed random variables $U,V$, let us define the function $\alpha(U;V)$
\begin{align}
\alpha(U;V) := \min\limits_{Q-U-V} I(U;Q|V) + I(U;V|Q). \label{eq:alpha}
\end{align}
This is closely related to the tension region $\Rtens U V$ of a pair of
random variables $U,V$ of~\cite{PrabhakaranPr14}.  We recall from there the
definition of $\Rtens U V$:
\begin{align*}
&&\Rtens UV &= i\Big(\Big\{ \big(I(V;Q|U),I(U;Q|V),I(U;V|Q)\big):\\
&&&\qquad\qquad \quad  Q
\text{ jointly distributed with } U,V\Big\}\Big),
\end{align*}
where $i({\sf S})$ denotes the {\em increasing hull} of ${\sf
S}\subseteq {\mathbb R}_+^3$, defined as $i({\sf S}) = \{ s \in {\mathbb
R}_+^3: \exists s'\in {\sf S} \text{ s.t. } s\geq s' \}$. Thus, we have
\[ \alpha(U;V) = \min\{ s_2+s_3 : (0,s_2,s_3) \in \Rtens U V\}.\]
From~\cite[Theorems~2.3 and 2.4]{PrabhakaranPr14}, we know that \Rtens U V is a
closed, convex region and that, without loss of generality, we may assume the
cardinality bound $|\mathcal{Q}| \leq |\mathcal{X}||\mathcal{Y}| + 2$ on
the alphabet of $Q$ in the definition. This justifies the use of $\min$ and
the cardinality bound in \eqref{eq:main} as well as the use of $\min$ in
\eqref{eq:alpha}.

As we will prove later, $\alpha$ as a function of the two views satisfies the following properties:
\begin{enumerate}[(a)]
\item $\alpha(U_{i};V_{i}) \leq \alpha(U_{i-1};V_{i-1}) + \alpha(X_i;Y_i),\, i=1,\ldots,n.$\\
This means that $\alpha$ of the views can increase at most by
$\alpha(X_i;Y_i)$ between the $(i-1)$-th and the $i$-th uses of the DMC.
Specifically, we will see that no increase in $\alpha$ can come from the
discussion over the noiseless channel, and an increase of at most
$\alpha(X_i;Y_i)$ accrues from the $i$-th use of the DMC. This allows us to
upper bound the increase in $\alpha$ of the views as the protocol
progresses. 

\item $\alpha(U_0;V_0) = \alpha(S_0S_1;K) = 0,$\\
$\alpha(U_\ter;V_\ter) = \alpha(U_n;V_n).$\\
This means that $\alpha$ of the initial views is $0$, and the $\alpha$ of
the final views is the same as after the final use of the DMC. 

\item $\alpha(S_0S_1;KS_K) \leq \alpha(U_\ter;V_\ter) + n\delta(\epsilon),$\\
where $\delta(\epsilon)\to 0$ as $\epsilon\to 0$.
This means that $\alpha$ of the final views must be at least close to the
$\alpha$ of the inputs and (ideal) outputs of Alice and Bob for the OT
function being securely computed. 

\item $\alpha(S_0S_1;KS_K) = nR$\\
This means that $\alpha$ when applied to the inputs and (ideal) outputs of
Alice and Bob gives the length of the input strings to Alice.

\item $\alpha(X;Y)$ is a concave function of $p(x)$ for a fixed $p(y|x)$.
This justifies the use of $\max$ instead of $\sup$ in \eqref{eq:main}.
\end{enumerate}
Now applying $(a)$ recursively and using $(b)$, we  get
\begin{align*}
\alpha(U_\ter;V_\ter) \leq \sum\limits_{i=1}^n \alpha(X_i;Y_i).
\end{align*}
Using $(c)$ and $(d)$, we get
\begin{align*}
nR = \alpha(S_0S_1;KS_K) \leq \alpha(U_\ter;V_\ter) + n\delta(\epsilon).
\end{align*}
Thus, we have
\begin{align*}
R &\leq \frac{1}{n}\sum\limits_{i=1}^n \alpha(X_i;Y_i) + \delta(\epsilon)
  \leq \max_{p(x)} \alpha(X;Y) + \delta(\epsilon).
\end{align*}
Thus, we may conclude that $\max\limits_{p(x)}\, \alpha(X;Y)$ is an upper
bound on the OT-capacity for the DMC $p(y|x)$.

It only remains to prove (a)-(e). 
\vspace{.1in}

\noindent(a) Let $\widetilde{U}_i$ and $\widetilde{V}_i$ be the views of Alice and Bob right before the $i$-th use of the DMC. Then, $\widetilde{U}_i=(U_{i-1},\Delta F_{i-1},X_i)$ and $\widetilde{V}_i=(V_{i-1},\Delta F_{i-1})$, where $\Delta F_{i-1}$ is the transcript of the messages exchanged over the noiseless discussion channel after the $i-1$-th use of the DMC and before the $i$-th use. Note that $U_i=\widetilde{U}_i$ and $V_i=(\widetilde{V}_i,Y_i)$. The following can be inferred from~\cite[Theorem~5.4]{PrabhakaranPr14}:
\[\Rtens {\widetilde{U}_i} {\widetilde{V}_i} \supseteq \Rtens {U_{i-1}}{V_{i-1}},\]
i.e., the tension region of views cannot shrink during the discussion phase, or by Alice doing a private computation of $X_i$. Hence,
\[ \alpha({\widetilde{U}_i};{\widetilde{V}_i}) \leq \alpha({U_{i-1}},{V_{i-1}}).\]
In fact, the second line of property (b) also follows similarly, i.e., $\alpha(U_\ter;V_\ter) = \alpha(U_n;V_n)$.
Property (a) now follows from the following lemma which is proved in the appendix.
\begin{lemma} \label{lem:propa}
\[\alpha(U_i;V_i) \leq \alpha({\widetilde{U}_i};{\widetilde{V}_i}) + \alpha(X_i;Y_i).\]
\end{lemma}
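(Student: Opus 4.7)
The plan is to build a witness $Q=(Q_1,Q_2)$ for $\alpha(U_i;V_i)$ by splicing together the optimizers of the right-hand side. Let $p(q_1|\widetilde{u})$ attain the minimum in $\alpha(\widetilde{U}_i;\widetilde{V}_i)$, and let $p(q_2|x)$ attain the minimum in $\alpha(X_i;Y_i)$. Define $(Q_1,Q_2)$ jointly with all the existing variables by drawing $Q_1\sim p(\cdot|\widetilde{U}_i)$ and $Q_2\sim p(\cdot|X_i)$, conditionally independently given $\widetilde{U}_i$ and independently of the DMC noise generating $Y_i$ from $X_i$. Since $U_i=\widetilde{U}_i$, $V_i=(\widetilde{V}_i,Y_i)$, $X_i$ is a (deterministic) function of $\widetilde{U}_i$, and $Y_i$ depends only on $X_i$, the construction yields $(Q_1,Q_2)\,\ci\,(\widetilde{V}_i,Y_i)\,|\,\widetilde{U}_i$, which is precisely the Markov chain $Q-U_i-V_i$ required in the definition of $\alpha$.

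The second step is to expand
\[
I(U_i;Q|V_i)+I(U_i;V_i|Q)
\]
via chain rules into the four pieces
\[
I(\widetilde{U}_i;Q_1|\widetilde{V}_i,Y_i),\ I(\widetilde{U}_i;Q_2|Q_1,\widetilde{V}_i,Y_i),\ I(\widetilde{U}_i;\widetilde{V}_i|Q_1,Q_2),\ I(\widetilde{U}_i;Y_i|\widetilde{V}_i,Q_1,Q_2),
\]
and bound each by a single ingredient of $\alpha(\widetilde{U}_i;\widetilde{V}_i)+\alpha(X_i;Y_i)$. Three conditional independences drive the bounds: (i) $Q_1\,\ci\,Y_i\,|\,\widetilde{U}_i,\widetilde{V}_i$ (since $Q_1$ is generated from $\widetilde{U}_i$ only); (ii) $Q_2\,\ci\,(\widetilde{U}_i,\widetilde{V}_i,Y_i,Q_1)\,|\,X_i$ (by the construction of $Q_2$); and (iii) $Y_i\,\ci\,(\widetilde{U}_i,\widetilde{V}_i,Q_1,Q_2)\,|\,X_i$ (the DMC property). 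Each bound is a one-line application of the ``drop conditioning that is conditionally independent of both mutual-information arguments'' identity, combined where necessary with $X_i$ being a function of $\widetilde{U}_i$ to replace $\widetilde{U}_i$ by $X_i$ in terms that involve $Y_i$. The resulting upper bounds on the four pieces are $I(\widetilde{U}_i;Q_1|\widetilde{V}_i)$, $I(X_i;Q_2|Y_i)$, $I(\widetilde{U}_i;\widetilde{V}_i|Q_1)$, and $I(X_i;Y_i|Q_2)$, which sum to exactly $\alpha(\widetilde{U}_i;\widetilde{V}_i)+\alpha(X_i;Y_i)$ by the choice of $Q_1$ and $Q_2$.

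The main obstacle is choosing an ordering of the chain-rule expansion in which each of the four resulting summands matches cleanly, through a single Markov relation from the list (i)--(iii), to exactly one of the four mutual informations that make up the target sum. Picking the ``wrong'' decomposition produces cross-terms that cannot be handled, whereas the ordering above separates the ``$\widetilde{V}_i$-flavored'' residual information from the ``$Y_i$-flavored'' one, so that $Q_1$ absorbs the former and $Q_2$ the latter. Once this ordering is fixed, the remainder is routine mutual-information calculus.
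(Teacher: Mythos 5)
Your proposal is correct and follows essentially the same route as the paper: the same product construction $Q=(Q_1,Q_2)$ drawn conditionally independently from $\widetilde{U}_i$ and $X_i$ (the paper's $(\tilde Q,Q')$), the same chain-rule split of $I(U_i;Q|V_i)+I(U_i;V_i|Q)$ into the same four terms, and the same Markov-chain relations to bound them by $I(\widetilde{U}_i;Q_1|\widetilde{V}_i)+I(\widetilde{U}_i;\widetilde{V}_i|Q_1)+I(X_i;Q_2|Y_i)+I(X_i;Y_i|Q_2)$. The only cosmetic difference is that you instantiate $Q_1,Q_2$ as the optimizers, whereas the paper proves the two inequalities for arbitrary admissible $\tilde Q,Q'$; this changes nothing of substance.
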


\vspace{.1in}
\noindent(b) By choosing $Q$ to be a constant, $\alpha(U_0;V_0)=\alpha(S_0,S_1;K)=0$ follows. Proof of $\alpha(U_\ter;V_\ter) = \alpha(U_n;V_n)$ was already mentioned in (a).

\vspace{.1in}
\noindent(c)
For a pair of random variables $U,V$, and $0\leq \epsilon \leq H(V|U)$, we define
\[ \alpha_{\epsilon}(U;V) = \min_{I(Q;V|U)\leq \epsilon} I(U;Q|V) + I(U;V|Q).\]
Note that $\alpha(U;V)=\alpha_0(U;V)$. We will need the following property
(proved in the appendix using the fact that $\Rtens U V$ is
closed~\cite[Theorem~2.4]{PrabhakaranPr14}).
\begin{lemma} \label{lem:continuity}
For any pair of random variables $U,V$, the function $\alpha_\epsilon(U;V)$ is right continuous in $\epsilon$ at 0.
\end{lemma}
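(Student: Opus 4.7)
The plan is to combine monotonicity of $\alpha_\epsilon$ in $\epsilon$ with the closedness of $\Rtens{U}{V}$ to bracket the one-sided limit from both sides. One direction is immediate: because relaxing the constraint $I(Q;V|U) \leq \epsilon$ only enlarges the feasible set, $\alpha_\epsilon(U;V)$ is non-increasing in $\epsilon$, so $\lim_{\epsilon \downarrow 0} \alpha_\epsilon(U;V)$ exists and is at most $\alpha_0(U;V) = \alpha(U;V)$. The substance of the lemma is the reverse inequality.

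For the reverse inequality, I would recast the minimization geometrically. Unwinding the definition of the increasing hull, one checks that
\[ \alpha_\epsilon(U;V) \;=\; \min\{\, s_2 + s_3 \,:\, (s_1,s_2,s_3) \in \Rtens{U}{V},\ s_1 \leq \epsilon\,\}, \]
with the minimum attained because the feasible set is closed (intersection of the closed region $\Rtens{U}{V}$ with the closed half-space $\{s_1 \leq \epsilon\}$) and the objective $s_2 + s_3$ is coercive on the non-negative orthant. Now fix any sequence $\epsilon_n \downarrow 0$ and corresponding minimizers $(s_1^{(n)}, s_2^{(n)}, s_3^{(n)}) \in \Rtens{U}{V}$. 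Monotonicity gives $s_2^{(n)} + s_3^{(n)} = \alpha_{\epsilon_n}(U;V) \leq \alpha_0(U;V)$, and $0 \leq s_1^{(n)} \leq \epsilon_n$, so the sequence is coordinate-wise bounded. Passing to a convergent subsequence, let its limit be $(s_1^\ast, s_2^\ast, s_3^\ast)$; since $\epsilon_n \to 0$, the first coordinate must vanish, $s_1^\ast = 0$. Closedness of $\Rtens{U}{V}$ (from Theorem~2.4 of~\cite{PrabhakaranPr14}) keeps the limit inside the region, so
\[ \alpha_0(U;V) \;\leq\; s_2^\ast + s_3^\ast \;=\; \lim_n \alpha_{\epsilon_n}(U;V). \]

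Combining the two inequalities yields $\lim_{\epsilon \downarrow 0} \alpha_\epsilon(U;V) = \alpha_0(U;V)$, which is exactly right-continuity at $0$. The only place real work happens is the compactness step; once the minimizers are seen to be uniformly bounded (via the entropy bound $\alpha_0(U;V) \leq H(U,V) < \infty$ in the finite-alphabet setting), closedness of the tension region does the rest. I expect no further subtleties, since the argument would break only if $\Rtens{U}{V}$ failed to be closed, and that is precisely the ingredient supplied by~\cite{PrabhakaranPr14}.
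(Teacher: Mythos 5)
Your proof is correct and follows essentially the same route as the paper: monotonicity of $\alpha_\epsilon$ in $\epsilon$ together with closedness of $\Rtens U V$ (Theorem~2.4 of~\cite{PrabhakaranPr14}) to force the limiting value to be attained at $s_1=0$. The only difference is packaging: you argue directly by extracting a convergent subsequence of three-dimensional minimizers, while the paper argues by contradiction via limit points of the projected set $\{(s_1,s_2+s_3)\}$; the ingredients and the substance are the same.
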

Property (c) now follows from the following lemma (also proved in the appendix):
\begin{lemma} \label{lem:propcadditional}
\begin{align*}
 \alpha_\epsilon(S_0S_1;KS_k) \leq \alpha(S_0S_1FX^n;KS_KFY^n) + n\delta_1(\epsilon),\\
 \alpha(S_0S_1FX^n; KS_KFY^n) \leq \alpha(U_\ter;V_\ter) + n\delta_2(\epsilon),
\end{align*}
where $\delta_1(\epsilon)\to 0$ and $\delta_2(\epsilon)\to 0$ as $\epsilon
\to 0$.
\end{lemma}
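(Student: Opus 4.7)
My plan is, for each of the two inequalities, to take an optimizer of the right-hand side and use it verbatim as the auxiliary random variable for the left-hand side, with the two security conditions \eqref{eq:security-against-A}, \eqref{eq:security-against-B} and Fano's inequality applied to the correctness condition supplying the $n\delta$ slack terms.

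For the first inequality, let $Q^*$ attain the minimum in $\alpha(S_0S_1FX^n;KS_KFY^n)$, so that $Q^*-(S_0S_1FX^n)-(KS_KFY^n)$ is Markov. Taking $Q'=Q^*$ as the auxiliary for $\alpha_\epsilon(S_0S_1;KS_K)$, the approximate Markov condition holds because $S_K$ is a deterministic function of $(S_0,S_1,K)$, so $I(Q^*;KS_K\mid S_0S_1)=I(Q^*;K\mid S_0S_1)$; chain-ruling through $FX^n$ and using the Markov chain for $Q^*$ collapses this to $I(FX^n;K\mid S_0S_1)\leq\epsilon$ by \eqref{eq:security-against-A}. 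For the objective, $I(S_0S_1;KS_K\mid Q^*)\leq I(S_0S_1FX^n;KS_KFY^n\mid Q^*)$ is immediate from data processing, and chain-ruling through $FY^n$ gives $I(S_0S_1;Q^*\mid KS_K)\leq I(S_0S_1;FY^n\mid KS_K)+I(S_0S_1FX^n;Q^*\mid KS_KFY^n)$. The slack $I(S_0S_1;FY^n\mid KS_K)=I(S_{\bar K};FY^n\mid KS_K)$ is controlled by combining \eqref{eq:security-against-B} with a Fano-type correctness argument, as discussed below.

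For the second inequality, let $Q^{**}$ attain the minimum in $\alpha(U_\ter;V_\ter)$ and take $Q'=Q^{**}$ as the auxiliary for $\alpha(S_0S_1FX^n;KS_KFY^n)$. The Markov chain $Q^{**}-(S_0S_1FX^n)-(KS_KFY^n)$ is obtained from $Q^{**}-U_\ter-V_\ter$ by marginalizing Alice's private randomness $R_A$ out of the right side: the rectangle property of two-party interactive protocols (conditional on the public transcript, Alice's and Bob's private data and randomness are independent) together with the memorylessness of the DMC yields $R_A\perp (K,F,Y^n,\widehat{S_K},R_B)\mid (S_0,S_1,F,X^n)$, and the deterministic function $S_K$ of $(S_0,S_1,K)$ can then be appended to the right side without breaking Markov. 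For the objective, the decisive step is a Fano bound: since $\Pr[\widehat{S_K}\neq S_K]\leq\epsilon$ and $S_K\in\{0,1\}^m$ with $m=nR$ and $R$ uniformly bounded on the region of interest, $H(S_K\mid\widehat{S_K})\leq 1+\epsilon m=n\delta_2(\epsilon)$. This allows one to swap $S_K$ for $\widehat{S_K}$ in each of the two conditional mutual-information terms at cost $O(n\delta_2)$, and data processing on $S_0S_1FX^n\subseteq U_\ter$ and $KFY^n\widehat{S_K}\subseteq V_\ter$ then closes the bound.

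The main technical obstacle lies in bounding $I(S_{\bar K};FY^n\mid KS_K)$ in the first inequality. The identity $I(S_{\bar K};FY^n\mid KS_K)=I(S_{\bar K};FY^n\mid K)+I(S_K;S_{\bar K}\mid KFY^n)$ (which follows from $S_K\perp S_{\bar K}\mid K$) shows that \eqref{eq:security-against-B} supplies only the first summand, and the second summand must be controlled via a Fano-type argument exploiting Bob's ability to recover $S_K$ from his view: conditioned on $(K,F,Y^n)$ (and Bob's randomness), $S_K$ has small conditional entropy, so it cannot carry much information about the independent-by-construction variable $S_{\bar K}$. The remainder of both inequalities is careful bookkeeping of chain rules and data-processing inequalities.
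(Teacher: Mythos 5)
Your proposal is correct in substance, and it is worth separating the two halves. For the first inequality you follow essentially the paper's route: take the optimizer $Q$ of $\alpha(S_0S_1FX^n;KS_KFY^n)$, show it is feasible for $\alpha_\epsilon(S_0S_1;KS_K)$ via \eqref{eq:security-against-A}, and bound the objective via \eqref{eq:security-against-B}; your chain-rule steps match the paper's \eqref{eq:pcadd1}--\eqref{eq:pcadd3}. In fact you are more careful than the paper at one point: the paper asserts $I(S_{\bar K};Y^nF|KS_K)=I(S_{\bar K};Y^nF|K)$ ``by independence of $S_0,S_1,K$,'' whereas, as you observe, the difference is exactly $I(S_K;S_{\bar K}|KFY^n)$, which need not vanish (a view revealing only $S_0\oplus S_1$ would make it as large as $m$) and must be killed by correctness; your Fano argument does this, at the cost of an extra $h(\epsilon)+\epsilon nR$ absorbed into $n\delta_1(\epsilon)$. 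For the second inequality you take a genuinely different and more elementary route: the paper first invokes the monotonicity result (property~3 of Theorem~5.4 of Prabhakaran--Prabhakaran) to get $\alpha(S_0S_1FX^n;K\widehat{S_K}FY^n)\leq\alpha(U_\ter;V_\ter)$ and then a total-variation continuity lemma (their Lemma~2.6) to replace $\widehat{S_K}$ by $S_K$, while you reuse the optimizer of $\alpha(U_\ter;V_\ter)$ directly, verify the Markov constraint from the rectangle property, and do the $S_K\leftrightarrow\widehat{S_K}$ swap by Fano; this buys self-containedness (no external lemmas) at the price of more hands-on bookkeeping.

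Three points need tightening to make the sketch airtight, all fixable with the same four-variable Markov chain $U_\ter - S_0S_1FX^n - K\widehat{S_K}FY^n - V_\ter$ that the paper states. First, in the final step of the second inequality, passing from $I(S_0S_1FX^n;Q|KFY^n\widehat{S_K})$ to $I(U_\ter;Q|V_\ter)$ is \emph{not} plain data processing, since the conditioning is enlarged by Bob's private randomness $R_B$; you need the Bob-side consequence of the chain, $I(S_0S_1FX^nR_A;R_B|KFY^n\widehat{S_K})=0$, to add $R_B$ to the conditioning for free. Second, in your bound on $I(S_K;S_{\bar K}|KFY^n)$, Fano controls $H(S_K|KFY^nR_B)$, so you must first write $I(S_K;S_{\bar K}|KFY^n)\leq I(R_B;S_{\bar K}|KFY^n)+I(S_K;S_{\bar K}|KFY^nR_B)$ and use $I(R_B;S_0S_1|KFY^n)=0$ (the footnote's rectangle property); your parenthetical ``(and Bob's randomness)'' gestures at this but the bridge should be explicit. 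Third, use Fano in the form $H(S_K|\widehat{S_K})\leq h(\epsilon)+\epsilon m$ rather than $1+\epsilon m$, otherwise $\delta_2$ picks up a $1/n$ term that does not vanish as $\epsilon\to 0$ for fixed $n$ (harmless for Theorem~\ref{thm:main}, but it technically violates the lemma as stated). Also, a minor slip: $R_A$ sits on the left (Alice) side of the chain, not the right, though the conditional independence you invoke is the correct one.
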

The proof of the first part relies on the privacy
conditions~\eqref{eq:security-against-A}-\eqref{eq:security-against-B}. The
second part uses $P(\widehat{S_K}\neq S_K) \leq \epsilon$.

\vspace{.1in}
\noindent(d) We prove the following lemma in the appendix.
\begin{lemma}{\label{lem:propd}}
$I(S_0S_1;KS_K|Q)+I(S_0S_1;Q|KS_K) \geq nR$ for all $Q-S_0S_1-KS_K$.
\end{lemma}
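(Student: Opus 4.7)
The plan is to expand both conditional mutual informations using the chain rule together with the Markov condition $Q - S_0 S_1 - K S_K$, and reduce the target inequality to the non-negativity of a single conditional mutual information. The key preliminary observation I would record is that since $K \ci (S_0,S_1)$, the Markov condition forces $K$ to be jointly independent of $(S_0, S_1, Q)$; in particular $K \ci Q$, and conditioning on $K=k$ does not alter the joint law of $(S_0, S_1, Q)$. I will also use $H(S_K \mid K) = m = nR$ (since each $S_i$ is uniform on $\{0,1\}^m$) and the elementary bijection between $(S_0, S_1)$ and $(K, S_K, S_{\bar{K}})$ at fixed $K$.

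For the first term, $H(K S_K \mid Q) = H(K \mid Q) + H(S_K \mid K, Q) = 1 + H(S_K \mid K, Q)$ by the independence just noted, while $H(K S_K \mid S_0 S_1, Q) = H(K S_K \mid S_0 S_1) = 1$ by the Markov chain and the fact that $S_K$ is determined by $(K, S_0, S_1)$; hence $I(S_0 S_1 ; K S_K \mid Q) = H(S_K \mid K, Q)$. For the second term, the bijection gives $H(S_0 S_1 \mid K, S_K) = H(S_{\bar{K}} \mid K, S_K) = m$ and $H(S_0 S_1 \mid Q, K, S_K) = H(S_{\bar{K}} \mid Q, K, S_K)$, whence $I(S_0 S_1 ; Q \mid K S_K) = m - H(S_{\bar{K}} \mid Q, K, S_K)$.

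Summing, the left-hand side equals $nR + H(S_K \mid K, Q) - H(S_{\bar{K}} \mid Q, K, S_K)$, so it is enough to show $H(S_K \mid K, Q) \geq H(S_{\bar{K}} \mid Q, K, S_K)$. Splitting over $K = 0$ and $K = 1$, each with probability $\tfrac{1}{2}$ and each leaving the law of $(S_0, S_1, Q)$ intact by the independence above, the difference collapses to $\tfrac{1}{2}\bigl[H(S_0 \mid Q) - H(S_0 \mid S_1, Q)\bigr] + \tfrac{1}{2}\bigl[H(S_1 \mid Q) - H(S_1 \mid S_0, Q)\bigr] = I(S_0 ; S_1 \mid Q) \geq 0$, which is the desired inequality. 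There is no deep obstacle: the argument is essentially an exercise in tracking the independence relations, and the content is that the symmetric roles of $S_0$ and $S_1$ make the two halves of the split-by-$K$ computation combine into a single conditional mutual information.
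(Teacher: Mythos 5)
Your proof is correct and is essentially the paper's own argument in a different arrangement: the paper rearranges the sum to the inequality $2H(KS_K|Q)-H(S_0S_1|Q)\geq 2$, which it proves via $H(S_0|Q)+H(S_1|Q)\geq H(S_0S_1|Q)$ — exactly your crux $I(S_0;S_1|Q)\geq 0$ — using the same facts you use (independence of $K$ from $(Q,S_0,S_1)$, $S_K$ determined by $(K,S_0,S_1)$, and $H(S_0S_1|KS_K)=nR$). There is no gap; the difference is only bookkeeping.
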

The property follows by noticing that equality is achieved by $Q=\emptyset$.

\vspace{.1in}
\noindent(e) For $Q-X-Y$,
\begin{align*}
&I(X;Q|Y) + I(X;Y|Q)\\ &= I(XY:Q) - I(Y;Q) + I(X;Y|Q)\\
 &= I(X;Q) - I(Y;Q) + I(X;Y|Q)\\
 &= H(Q|Y) - H(Q|X) + H(Y|Q) -H(Y|X).
\end{align*}
For fixed $p(q|x)$ and $p(y|x)$, the above expression is concave in $p(x)$
since $H(Q|X),H(Y|X)$ are linear in $p(x)$, and both $H(Q|Y),H(Y|Q)$ are
concave in $p(x)$; the latter can be shown, for instance, using the
convexity of relative entropy. i.e., for the DMC $p(y|x)$, if we define
\[ f_{p(q|x)}(p(x)) := I(X;Q|Y) + I(X;Y|Q),\]
where the mutual information terms are evaluated using
$p(x,y,q)=p(x)p(q|x)p(y|x)$, then, for $0\leq \lambda \leq 1$,
\begin{align*} \lambda f_{p(q|x)}(p_1(x)) &+ (1-\lambda)
f_{p(q|x)}(p_2(x))\\ &\quad\leq
f_{p(q|x)}(\lambda p_1(x) + (1-\lambda)p_2(x)).
\end{align*}
Property~(e) now follows from noticing that $\alpha(X;Y)=\min\limits_{p(q|x)}
f_{p(q|x)}(p(x))$.

\section{An Example}{\label{ex}}
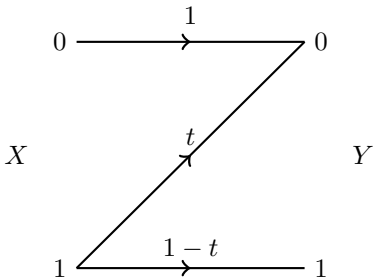
\begin{figure}[htb]
\setlength{\unitlength}{1cm}
\centering
\begin{tikzpicture}[scale=1, thick,decoration={ markings, mark=at position
0.5 with {\arrow[scale=1.3]{>}}}]

\draw[postaction={decorate}] (0,0) -- (3,0);
\draw[postaction={decorate}] (0,0) -- (3,3);
\draw[postaction={decorate}] (0,3) -- (3,3);

\node[above] at (1.5,0) {$1-t$};
\node[above] at (1.5,1.5) {$t$};
\node[above] at (1.5,3.1) {$1$};

\node[right] at (3,0) {$1$};
\node[right] at (3,3) {$0$};
\node[left] at (0,0) {$1$};
\node[left] at (0,3) {$0$};

\node[left] at (-.5,1.5) {$X$};
\node[right] at (3.5,1.5) {$Y$};

\end{tikzpicture}
\caption{The Z-channel (or binary asymmetric channel)}
\label{example}
\end{figure}

Consider the Z-channel $p(y|x)$ shown in Figure~\ref{example}.
$p(0|0)=1-p(1|0)=1$, and $p(0|1)=1-p(1|1)=t$, where $0\leq t\leq 1$.
Figure~\ref{fig:plot} compares the upper bound~\eqref{eq:main} on OT capacity
from Theorem~\ref{thm:main} with the upper bound~\eqref{eq:AC13} of Ahlswede
and Csisz\'ar~\cite{AhlswedeC13}. In fact, for ease of numerical calculation,
what is plotted is \eqref{eq:main} optimized over a smaller set of choices for
$Q$; specifically, we restrict to binary $Q$ and $p(q|x)$ of the form
$p(0|1)=0$. Even with this restriction, we observe that for a range of $t$'s
the upper bound of \eqref{eq:main} strictly improves upon \eqref{eq:AC13}.

For comparison, we also plot a simple lower bound to the OT capacity of this
channel. Let us consider two channel uses at a time. Now if we only use the
input letters from $\{01,10\}$, then this is a binary erasure channel (erasure
symbol 00) with erasure probability $t$ for which the OT capacity was shown
in~\cite{AhlswedeC13} to be $\min(1-t,t)$. So a lower bound for the OT capacity
of the Z-channel is $\frac{\min(1-t,t)}{2}$. We leave the problem of
characterizing the OT capacity of the Z-channel as an interesting open problem.
We conjecture that at least the lower bound, if not both the bounds, can be
improved.

\begin{figure}[tb]
\centering
\includegraphics[scale=0.45]{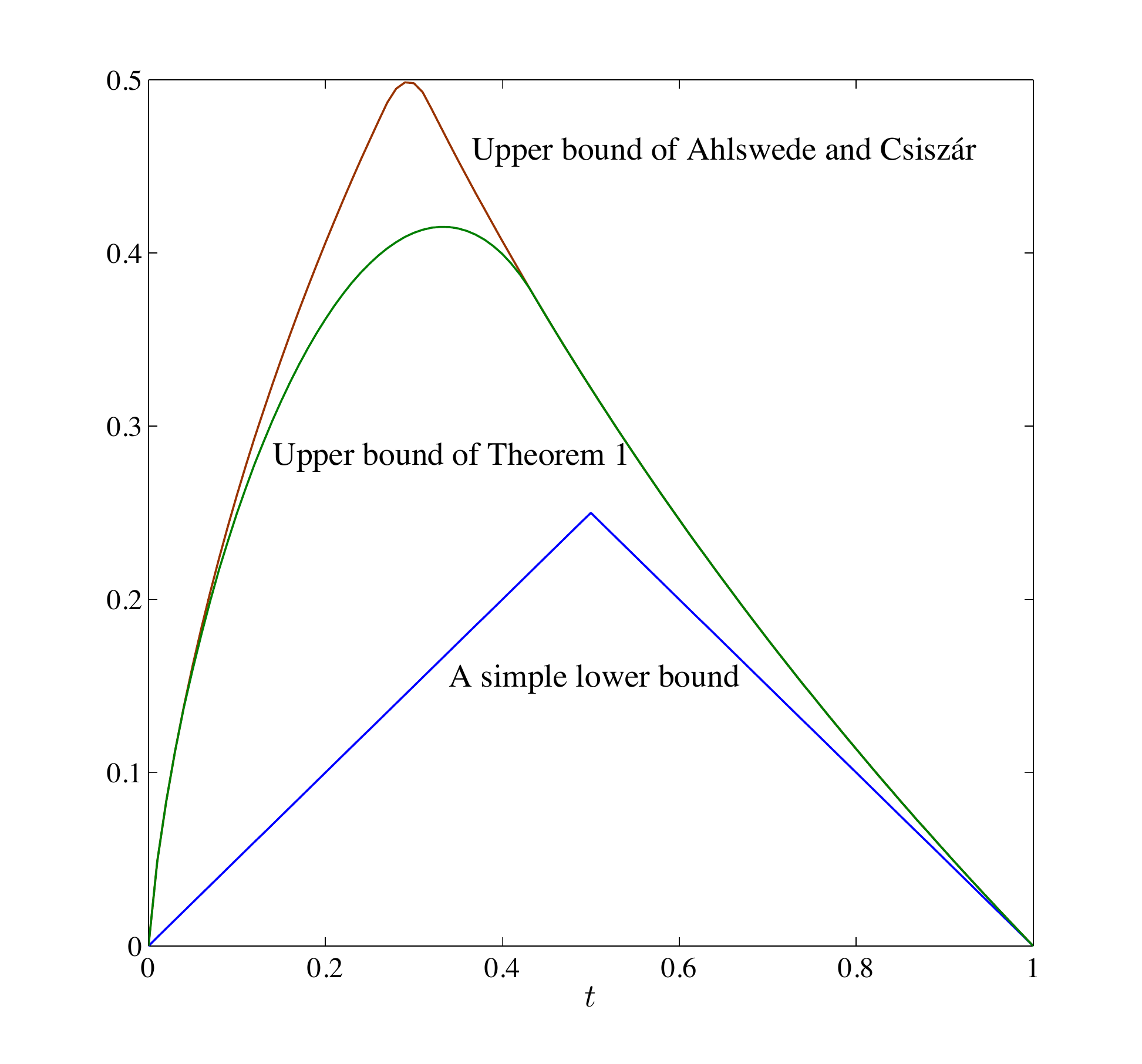}
\caption{Bounds on OT capacity of the Z-channel}
\label{fig:plot}
\end{figure}

\section{Discussion} \label{sec:discussion}

An analogous upper bound on the OT capacity of the source model can be
derived using the results in~\cite{PrabhakaranPr14}. Applying
Lemma~\ref{lem:propd} of this paper to \cite[Corollary~5.8]{PrabhakaranPr14},
the OT capacity $C$ of the discrete memoryless source $p_{X,Y}$ can be
shown to satisfy 
\[ C\leq \min_{Q-X-Y} I(X;Q|Y) + I(X;Y|Q).\]
Details are deferred to a full-length version of this paper..

While this paper focused on deriving an upper bound on OT capacity of DMCs,
the technique is more general. In fact, we can derive a general upper bound
on the secure sampling capacity of DMCs analogous to the upper bound
in~\cite[Section~V]{PrabhakaranPr14} for the source model. The upper bound
on OT capacity presented here will follow as a corollary of such a general
upper bound. This is deferred to a full-length version.

The definition of OT capacity of DMCs
in~\cite{NascimentoW08},\cite{ImaiMN06},\cite{AhlswedeC13},\cite{PintoDMN11} is in
terms of the length of the string ($m$) per channel use. A different (not
equivalent) possibility is to fix $m$ (say $m=1$, for 1-bit OT) and consider
the number of independent $m$-string OTs obtained per channel use. This is of
interest since, in many secure computation protocols, several independent
instances of OT are called for (unlike the one instance of a long string-OT
considered in the original definition of OT capacity). We may also consider
varying the number of strings given to Alice and the number of strings picked
up Bob.  The general upper bound mentioned above provides means to derive upper
bounds on the rates in all these cases.

The achievability question of how to obtain ``high'' rates of secure
computation/sampling, in general, remains open. The capacity achieving
schemes for generalized erasure channels of~\cite{AhlswedeC13},\cite{PintoDMN11}
do not appear to extend to the general case. For the alternative
definitions of capacity mentioned above, the best achievability results
available for the general case only achieve very low (but non-zero)
rates~\cite{IshaiKORPSW11}. This is an important problem which requires
further research.

Unlike in the two-party setting, information theoretically secure computation,
in general, becomes feasible in the multiuser case even when only private
randomness at users and private noise-free channels between every pair of users
are available, provided the fraction of colluding adversarial users is
constrained (less than 1/2 for honest-but-curious and less than 1/3 for
malicious)~\cite{BGW88,CCD88}.  When such constraints are not satisfied,
availability of pairwise OTs, for instance, can enable secure computation in
general~\cite{HarnikIK07,PrabhakaranP12}.  Hence, OT capacity of multiuser
channels is also of interest~\cite{MishraDPD14}. Secure computation in
multiuser (noisy) networks is another question which merits further study.

\section*{Acknowledgment}

The research was funded in part by a grant from the Information Technology
Research Academy, Media Lab Asia, to IIT Bombay and TIFR, and by a
Ramanujan Fellowship from the Department of Science and Technology,
Government of India, to V. Prabhakaran. The first author would like to
thank Bharti Centre for Communications for supporting his research and
Manoj Mishra for all the help.

\appendix
\begin{proof}[Proof of Lemma~\ref{lem:propa}]

Note that $U_i=\tilde{U}_i$ which contains $X_i$ as part of it, and $V_i=(\tilde{V}_i,Y_i)$.
Suppose we have $\tilde{Q}$ jointly distributed with $\tilde{U}_i,\tilde{V}_i$ such that $\tilde{Q} - \tilde{U}_i - \tilde{V}_i$ is a Markov chain, and $Q'$ is jointly distributed with $X_i,Y_i$ such that $Q' - X_i - Y_i$ a Markov chain. We define a random variable $Q$ with alphabet ${\mathcal Q} = \tilde{\mathcal Q} \times {\mathcal Q}'$, where $\tilde{\mathcal Q}$ and ${\mathcal Q}'$ are the alphabets of $\tilde{Q}$ and $Q'$ respectively, jointly distributed with $U_i,V_i$ as follows: 
\begin{align*}
p_{Q|U_i,V_i}((\tilde{q},q')|u_i,v_i) = p_{\tilde{Q}|\tilde{U}_i}(\tilde{q}|\tilde{u}_i)p_{Q'|X_i}(q'|x_i),
\end{align*}
where the $\tilde{u}_i$ on the right hand side is the same as $u_i$, and $x_i$ is the $x_i$ which is part of $u_i$. Notice that $Q - U_i - V_i$ is a Markov chain.

To prove the lemma, it is enough to show the following two inequalities
\begin{align}
I(U_i;Q|V_i) &\leq I(\tilde{U}_i;\tilde{Q}|\tilde{V}_i) + I(X_i;Q'|Y_i),\\
I(U_i;V_i|Q) &\leq I(\tilde{U}_i;\tilde{V}_i|\tilde{Q}) + I(X_i;Y_i|Q').
\end{align}
With come abuse of notation, if we write $Q=(\tilde{Q},Q')$, then 
\begin{align}
&p_{U_i,\tilde{V}_i,X_i,Y_i,Q}(u,\tilde{v},x,y,(\tilde{q},q'))\nonumber\\
&= p_{\tilde{U}_i,\tilde{V}_i}(u,\tilde{v})p_{\tilde{Q}|\tilde{U}_i}(\tilde{q}|u)p_{X_i|\tilde{U}_i}(x|u)p_{Y|X}(y|x)p_{Q'|X_i}(q'|x), \label{eq:propajoint}
\end{align}
where $p_{Y|X}$ is the DMC and $p_{X_i|\tilde{U}_i}$ is deterministic.
We have
\begin{align*}
I(U_i;Q|V_i) 
&= I(U_i;\tilde{Q}Q'|\tilde{V}_iY_i)\\
&= I(U_i;\tilde{Q}|\tilde{V}_iY_i) + I(U_i;Q'|\tilde{Q}\tilde{V}_iY_i)\\
&\leq I(U_iY_i;\tilde{Q}|\tilde{V}_i) + I(U_i\tilde{Q}\tilde{V}_i;Q'|Y_i)\\
&= [I(U_i;\tilde{Q}|\tilde{V}_i) + I(Y_i;\tilde{Q}|U_i,\tilde{V}_i)]\\ 
&\quad+
   [I(X_i;Q'|Y_i) + I(U_i\tilde{Q}\tilde{V}_i;Q'|X_iY_i)]\\
&= I(\tilde{U}_i;\tilde{Q}|\tilde{V}_i) + I(X_i;Q'|Y_i),
\end{align*}
where, in the penultimate step, we used the fact that $X_i$ is a part of
$U_i$, and in the last step, we used $U_i=\tilde{U}_i$ and the fact that, for
the joint distribution in \eqref{eq:propajoint}, $\tilde{Q} -
(U_i,\tilde{V}_i) - Y_i$ and $Q' - (X_i,Y_i) - (U_i,\tilde{Q},\tilde{V}_i)$
are Markov chains. 

Similarly,
\begin{align*}
I(U_i;V_i|Q)
&= I(U_i;\tilde{V}_iY_i|\tilde{Q}Q')\\
&= I(U_i;\tilde{V}_i|\tilde{Q}Q') + I(U_i;Y_i|\tilde{V}_i\tilde{Q}Q')\\
&\leq I(U_iQ';\tilde{V}_i|\tilde{Q}) + I(U_i\tilde{V}_i\tilde{Q};Y_i|Q')\\
&= [I(U_i;\tilde{V}_i|\tilde{Q}) + I(Q';\tilde{V}_i|U_i\tilde{Q})]\\
&\quad+
   [I(X_i;Y_i|Q') + I(U_i\tilde{V}_i\tilde{Q};Y_i|Q'X_i)]\\
&= I(\tilde{U}_i;\tilde{V}_i|\tilde{Q}) + I(X_i;Y_i|Q'),
\end{align*}
where the last step follows from the fact that, for the joint distribution
in \eqref{eq:propajoint}, $Q' - (U_i,\tilde{Q}) - \tilde{V}_i$ and
$(Q',Y_i) - X_i - (U_i,\tilde{V}_i,\tilde{Q})$ are Markov chains.
\end{proof}

\begin{proof}[Proof of Lemma~\ref{lem:continuity}]
We fix the joint distribution $U,V$. Below, we will write $\alpha_\epsilon$
to mean $\alpha_\epsilon(U;V)$.
Note that $\alpha_\epsilon$ is a non-increasing function of
$\epsilon$.
Suppose $\alpha_\epsilon$ is not (right) continuous at $\epsilon=0$, Then
there exists a sequence $\epsilon_n \to 0$
such that $\alpha_{\epsilon_n} \not\to \alpha_0$. So there exists a $\delta > 0$ and a
monotone subsequence $\epsilon'_n \downarrow 0$ such that
$\alpha_0-\alpha_{\epsilon'_n} \geq \delta, \forall
n$. Since $\alpha_{\epsilon'_n}$ is a monotone non-decreasing sequence bounded above it is
convergent. Let $l = \sup\limits_{n} \alpha_{\epsilon'_n}$. Then, $l = \lim\limits_{n \to \infty }
\alpha_{\epsilon'_n} \leq \alpha_0-\delta$. Since $\Rtens U V$ is a closed
region~\cite[Theorem~2.4]{PrabhakaranPr14}, so is
\[ \RT_{1,2+3}(U;V) := \{ (s_1,s_2+s_3): (s_1,s_2,s_3)\in\Rtens U V\}.\]
Hence, all the limit points of $\RT_{1,2+3}(U;V)$ lie in itself. So $(0,l) \in
\RT_{1,2+3}(U;V)$. This leads to a contradiction as $l \leq \alpha_0 -
\delta$ and, by definition, $\alpha_0$ is the minimum attainable value of
$s$ such $(0,s)\in\RT_{1,2+3}(U;V)$.
\end{proof}

\begin{proof}[Proof of Lemma~\ref{lem:propcadditional}]
The proof of the first part is along the lines of the proof of
property~3$'$
of~\cite[Theorem~5.7]{PrabhakaranPr14}. Consider any $Q$ jointly
distributed with $S_0,S_1,K,F,X^n,Y^n$. We have
\begin{align}
&I(KS_KY^nF;Q|S_0S_1FX^n)\nonumber\\
 &= I(KY^nF;Q|S_0S_1FX^n)\nonumber\\
 &\geq I(K;Q|S_0S_1FX^n)\nonumber\\
 &= I(K;QFX^n|S_0S_1) - I(K;FY^n|S_0S_1)\nonumber\\
 &\geq I(K;Q|S_0S_1) - I(K;FY^n|S_0S_1)\nonumber\\
 &\geq I(K;Q|S_0S_1) - \epsilon\quad\text{(by
\eqref{eq:security-against-A})}\nonumber\\
 &= I(KS_K;Q|S_0S_1) - \epsilon. \label{eq:pcadd1}\\
&I(S_0S_1X^nF;KS_KY^nF|Q) \geq I(S_0S_1;KS_K|Q).\label{eq:pcadd2}\\
&I(S_0S_1X^nF;KS_KY^nF;Q|KS_KY^nF)\nonumber\\ 
&= I(S_0S_1X^n;Q|KS_KY^nF)\nonumber\\
&\geq I(S_0S_1;Q|KS_KY^nF)\nonumber\\
&= I(S_0S_1;QY^nF|KS_K)-I(S_0S_1;Y^nF|KS_K)\nonumber\\
&\geq I(S_0S_1;Q|KS_K) - I(S_0S_1;Y^nF|KS_K). \label{eq:pcadd2.5}\\
\intertext{But,}
&I(S_0S_1;Y^nF|KS_K)\nonumber\\
&=I(S_{\bar{K}};Y^nF|KS_K)\nonumber\\
&=I(S_{\bar{K}};Y_nF|K) \quad\text{(by indep. of $S_0,S_1,K$)}\nonumber\\
&\leq n\epsilon. \quad\text{(by \eqref{eq:security-against-B})}
\label{eq:pcadd2.75}\\
\intertext{Substituting \eqref{eq:pcadd2.75} in \eqref{eq:pcadd2.5},}
&I(S_0S_1X^nF;KS_KY^nF;Q|KS_KY^nF)\nonumber\\
&\geq I(S_0S_1;Q|KS_K) - n\epsilon. \label{eq:pcadd3}
\end{align}
The first part of the lemma follows from
\eqref{eq:pcadd1},\eqref{eq:pcadd2}, and \eqref{eq:pcadd3}. Specifically,
\[ \alpha_\epsilon(S_0S_1;KS_K) \leq \alpha(S_0S_1X^nF;KS_KY^nF) +
n\epsilon.\]

To show the second part, let us first observe that $U_\ter$ contains $S_0S_1FX^n$ and $V_\ter$ contains $K\widehat{S_K}FY^n$. Further more,
\[ U_\ter - S_0S_1FX^n - K\widehat{S_K}FY^n - V_\ter\]
is a Markov chain, i.e., conditioned on $S_0,S_1,F,X^n$, Alice's final view (which only additionally contains her private randomness) is conditionally independent of Bob's final view, and similarly, Bob's view is conditionally independent of Alice's view conditioned on $K,\widehat{S_K},F,Y^n$. Hence, by property~3 of~\cite[Theorem~5.4]{PrabhakaranPr14}, we have
\[ \Rtens{S_0S_1FX^n}{K\widehat{S_K}FY^n} \supseteq \Rtens {U_\ter} {V_\ter}.\]
This implies that
\[ \alpha(S_0S_1FX^n;K\widehat{S_K}FY^n) \leq \alpha(U_\ter;V_\ter).\]
It remains to show that
\begin{align*}  \alpha(S_0S_1FX^n&;KS_KFY^n)\\ 
&\qquad\leq  \alpha(S_0S_1FX^n;K\widehat{S_K}FY^n) + n\delta_2(\epsilon).
\end{align*}

Let $U:=(S_0,S_1,F,X^n)$, $\hat{V}:=(K,\widehat{S_K},F,Y^n)$, and $V:=(K,S_K,F,Y^n)$. Since $F$ is part of both $U$ and $\hat{V}$, we have
\begin{align*}
 \alpha(U;\hat{V}) &= \min_{Q-U-\hat{V}} I(U;Q|\hat{V}) + I(U;\hat{V}|Q)\\
 &= \min_{Q'-U-\hat{V}} I(U;Q'F|\hat{V}) + I(U;\hat{V}|Q'F)
\end{align*}
Let $U':=(S_0,S_1,X^n)$, $\hat{V}':=(K,\widehat{S_K},Y^n)$, and $V':=(K,S_K,Y^n)$. Then,
\begin{align*}
\alpha(U;\hat{V}) = \min_{Q'-(U'F)-(\hat{V}'F)} I(U';Q'|\hat{V}'F) + I(U';\hat{V}'|Q'F).
\end{align*}
Similarly,
\begin{align*}
 &\alpha(U;V)\\ &\qquad = \min_{\tilde{Q}'-(U'F)-(V'F)} I(U';\tilde{Q}'|V'F) +
I(U';V'|\tilde{Q}'F).
\end{align*}
For $Q'$ jointly distributed with $(U',F,\hat{V}')$ such that
$Q'-(U',F)-(\hat{V}',F)$ is a Markov chain, we will define $\tilde{Q}'$
with the same alphabet as $Q'$ and jointly distributed with $(U',F,V')$
such that $Q'-(U',F)-(V',F)$ is a Markov chain by defining
\[ p_{\tilde{Q}'|U',F}(q'|u',f):=p_{Q'|U',F}(q'|u',f).\]
Then, since $P(\hat{V}\neq V)\leq \epsilon$, the total variation distance
between $(Q',U',F,\hat{V}')$ and $(\tilde{Q}',U',F,V')$ is at most $\epsilon$,
where total variation distance between two random variables $W$ and $W'$
defined over the same alphabet ${\mathcal W}$ is defined as
$\Delta(W,W')=\frac{1}{2}\sum_{w\in{\mathcal W}} |p_W(w)-p_{W'}(w)|$.

We will make use~\cite[Lemma~2.6]{PrabhakaranPr14} to obtain
\begin{align*}
I(U';\tilde{Q}'|V'F)&\leq I(U';Q'|\hat{V}'F)\\ &\qquad\qquad + 2H_2(\epsilon) +
\epsilon n(2R+\log|{\mathcal X}|)\\
I(U';V'|\tilde{Q}'F)&\leq I(U';\hat{V}'|Q'F)\\ &\qquad\qquad +
2H_2(\epsilon) + \epsilon n(2R+\log|{\mathcal X}|),
\end{align*}
where $H_2$ is the binary entropy function, and the term
$n(2R+\log|{\mathcal X}|)$ is, in fact, the cardinality of $U'$.
From this we may conclude that
\begin{align*}  \alpha(S_0S_1FX^n&;KS_KFY^n)\\ 
&\qquad\leq  \alpha(S_0S_1FX^n;K\widehat{S_K}FY^n) + n\delta_2(\epsilon),
\end{align*}
where $\delta_2(\epsilon)\to 0$ as $\epsilon \to 0$. This completes the
proof.
\end{proof}

\begin{proof}[Proof of Lemma~\ref{lem:propd}]
We have, 
\begin{align*}
&I(S_0S_1;Q|KS_K)\\
&= I(S_0S_1;QKS_K) - I(S_0S_1;KS_K)\\
&=I(S_0S_1;KS_K|Q)
+ I(S_0S_1;Q) - I(S_0S_1;KS_K).
\end{align*}
Using this, we can write
\begin{align}
&I(S_0S_1;KS_K|Q)+I(S_0S_1;Q|KS_K) \nonumber\\
&= 2I(S_0S_1;KS_K|Q) + [I(S_0S_1;Q) - I(S_0S_1;KS_K)]\nonumber\\
&= 2[H(KS_K|Q) - H(KS_K|S_0S_1)]\nonumber\\ &\hspace{2cm} + [H(S_0S_1|KS_K) - H(S_0S_1|Q)],
\label{eq:propodinter}
\end{align}
where in the last step we used the fact that $Q - S_0S_1 - KS_K$ is a Markov chain.
As we will argue below, under this Markov chain, 
\[ 2H(KS_K|Q) -  H(S_0S_1|Q) \geq 2.\]
Using this, along with $H(KS_K|S_0S_1)=1$ and $H(S_0S_1|KS_K)=nR$ in
\eqref{eq:propodinter}, we can conclude that
\[ I(S_0S_1;KS_K|Q)+I(S_0S_1;Q|KS_K) \geq nR.\]

It only remains to show that $2H(KS_K|Q) -  H(S_0S_1|Q) \geq 2$ if  $Q -
S_0S_1 - KS_K$ is a Markov chain. Since $K$ is independent of $(S_0,S_1)$, it is also independent of $(Q,S_0,S_1)$. Hence, using the fact that $K$ is a uniform bit,
\begin{align*}
2H(KS_K|Q)&=2H(K) + 2H(S_K|QK)\\
 &= 2 + H(S_0|Q,K=0) + H(S_1|Q,K=1)\\
 &= 2 + H(S_0|Q)+H(S_1|Q)\\
 &\geq 2 + H(S_0S_1|Q).
\end{align*}
This completes the proof.
\end{proof}

\end{document}